\def\E{{\cal E}}
\def\M{{\cal M}}
\def\bd{{\partial}}
\def\mx{{\rm max}}
\newtheorem{theorem}{Theorem}[section]
\newtheorem{lemma}[theorem]{Lemma}
\begin{document}

\begin{titlepage}

\title{Finding the Maximal Empty Rectangle Containing a Query
Point\thanks{%
Work by Haim Kaplan was partially supported by Grant 2006/204 from
the U.S.--Israel Binational Science Foundation, and by Grant 822/10
from the Israel Science Fund.
Work by Micha Sharir was partially
supported by NSF Grant CCR-08-30272,
by Grant 2006/194 from the U.S.--Israel Binational Science Foundation,
by Grant 338/09 from the Israel Science Fund,
and by the Hermann Minkowski--MINERVA Center for Geometry at Tel Aviv
University.
}}

\author{
Haim Kaplan\thanks{%
School of Computer Science, Tel Aviv University,
Tel Aviv 69978, Israel. E-mail: {\tt haimk@post.tau.ac.il }}
\and
Micha Sharir\thanks{%
School of Computer Science, Tel Aviv University,
Tel Aviv 69978, Israel, and
Courant Institute of Mathematical Sciences,
New York University, New York, NY 10012, USA.
E-mail: {\tt michas@post.tau.ac.il }}%
}

\maketitle

\vspace*{-0.5cm}
\begin{abstract}
Let $P$ be a set of $n$ points in an axis-parallel rectangle $B$ in
the plane. We present an $O(n\alpha(n)\log^4 n)$-time algorithm to
preprocess $P$ into a data structure of size $O(n\alpha(n)\log^3
n)$, such that, given a query point $q$, we can find, in $O(\log^4
n)$ time, the largest-area axis-parallel rectangle that is contained
in $B$, contains $q$, and its interior contains no point of $P$.
This is a significant improvement over the previous solution of
Augustine {\em et al.} \cite{qmex}, which uses slightly
superquadratic preprocessing and storage.
\end{abstract}

\end{titlepage}

\section{Introduction}
\label{sec:intro}

Let $P$ be a set of $n$ points in a fixed axis-parallel rectangle
$B$ in the plane. A {\em $P$-empty rectangle} (or just an empty
rectangle for short) is any axis-parallel rectangle that is
contained in $B$ and its interior does not contain any point of $P$.
We consider the problem of preprocessing $P$
into a data structure so that, given a query point $q$, we can
efficiently find the largest-area $P$-empty rectangle containing $q$.
This problem arises in electronic design automation, in the
context of the design and verification of physical layouts of
integrated circuits (see, e.g., \cite[Chapter 9]{Ullman:1984}).

The largest-area $P$-empty rectangle containing $q$ is a
{\em maximal empty rectangle}, namely, it is a $P$-empty
rectangle not contained in any other $P$-empty rectangle.
Each side of a maximal empty rectangle abuts a point of $P$
or an edge of $B$. See Figure~\ref{mx0} for an illustration.
Maximal empty rectangles arise in the enumeration of ``maximal
white rectangles'' in image segmentation \cite{baird90}.

\begin{figure}[htb]
\begin{center}
\input{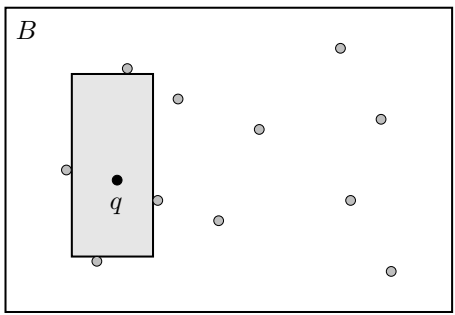}
\caption{A maximal $P$-empty rectangle containing $q$.}
\label{mx0}
\end{center}
\end{figure}

The problem considered here can be formulated in more general
settings by considering other classes of shapes that have to contain
the query point (and be $P$-empty), and other kinds of bounding
regions. To the best of our knowledge this problem was first
introduced by Augustine \emph{et al.}~\cite{qmex}, who studied the
case where the regions containing the query point are disks and the
case where these regions are axis-parallel rectangles. For the case
of disks they give a data structure that requires $O(n^2)$ space,
$O(n^2\log n)$
 preprocessing time, and can answer a query in $O(\log^2 n)$
time. For the case of rectangles (the one also considered here) they
give a data structure whose storage and preprocessing time are both
$O(n^2\log n)$, and the query time is $O(\log n)$.

\medskip

\paragraph{Our result.}
We significantly improve the result of Augustine \emph{et al.}~for
the case of axis-parallel rectangles, in terms of storage and
preprocessing costs. Specifically, we present a data structure that
requires $O(n\alpha(n)\log^3 n)$ space and can be used to find the
largest-area $P$-empty rectangle containing a query point $q$ in
$O(\log^4 n)$ time. The structure can be constructed in
$O(n\alpha(n) \log^4 n)$ time. Here $\alpha(n)$ is the slowly
increasing inverse Ackermann function.

In a nutshell, our algorithm computes all the maximal $P$-empty
rectangles and preprocesses them into a data structure which is then
searched with the query point. A major problem that one faces is
that the number of maximal $P$-empty rectangles can be quadratic in
$n$ (see, e.g., Figure \ref{stair}), so we cannot afford to compute
them explicitly (this issue was ignored in \cite{qmex}).

One of the main ingredients of our solution developed to overcome
this difficulty and significant in itself, is a technique for
handling \emph{partial inverse Monge matrices} (see
\cite{SMAWK87,Klawe92,KlaweK90}). Specifically, we observe that the
areas of certain subsets of maximal empty rectangles can be arranged
in a matrix satisfying the \emph{inverse Monge property}; see below
for details. The structure of standard Monge (and inverse Monge)
matrices supports linear or near-linear algorithms (in the number of
rows and columns of the matrix) for finding all the row maxima (or
minima) in such a matrix. These algorithms usually avoid an explicit
representation of the matrix, and instead compute the row maxima or
minima by accessing only a small number of entries, each of which
can be retrieved in $O(1)$ time. For full matrices only a linear
number of entries is needed~\cite{SMAWK87}, and for certain kinds of
structured partial matrices an almost linear number of entries
suffices~\cite{Klawe92,KlaweK90}.

We extend these basic techniques, and develop a data structure that
supports efficient maxima \emph{queries} in certain submatrices of a
partial inverse Monge matrix. In a typical query of this kind we
specify a row and a range of columns, and seek the maximum element
of this row within this range. In another kind of queries, we
specify a prefix of the rows and a prefix of the columns, and seek
the maximum in the submatrix formed by these prefixes. A variation
of our data structure has already been applied in a recent maximum
flow algorithm for planar graphs \cite{BKMNW11}. Our data structure
can be extended for general submatrix queries and  is likely to find
additional applications.

\medskip

\paragraph{Related work.}
An easier problem that has been studied more extensively is that
of finding the largest-area $P$-empty axis-parallel rectangle
contained in $B$.
Notice that the largest $P$-empty \emph{square} is easier to compute,
because its center is a Voronoi vertex in the $L_\infty$-Voronoi
diagram of $P$ (and of the edges of $B$), which can be found in
$O(n\log n)$ time \cite{ChewD85,LS87}. There have been several
studies on finding the largest-area bounded maximal empty
rectangle~\cite{Atallah1986,Chazelle:1986,Naamad84}; the fastest
algorithm to date, by Aggarwal and Suri~\cite{Aggarwal:1987},
takes $O(n\log^2 n)$ time and $O(n)$ space.
Nandy \emph{et al.}~\cite{Nandy94} show how to find the largest-area
axis-parallel empty rectangle avoiding a set of polygonal
obstacles, within the same time bounds.
Boland and Urrutia~\cite{Boland01findingthe} present an algorithm
for finding the largest-area axis-parallel rectangle inside an
$n$-sided simple polygon in $O(n\log n)$ time.
Chaudhuri \emph{et al.}~\cite{Chaudhuri:2003} give an algorithm
to find the largest-area $P$-empty rectangle, with no restriction on
its orientation, in $O(n^3)$ time.

The variant studied in this paper, of finding the largest $P$-empty
rectangle containing a query point is newer and, as already
mentioned, the only previous study of this problem that we are aware
of is by Augustine {\em et al.}\ \cite{qmex}.

\section{Preliminaries}

We assume that the points of $P$ are in general position, so that
(i) no two points have the same $x$-coordinate or the same
$y$-coordinate, and (ii) all the maximal $P$-empty rectangles
have distinct areas.

One of the auxiliary structures that we use is a two-dimensional
segment tree, which stores certain subsets of $P$-maximal empty
rectangles. Here is a brief review of the structure, provided for
the sake of completeness. Let $\M$ be a set of $N$ axis-parallel
rectangles in the plane. We first construct a standard segment tree
$S$~\cite{Dutchbook}  on the $x$-projections of the rectangles in
$\M$. This is a balanced binary search tree whose leaves correspond
to the intervals between the endpoints of the $x$-projections of the
rectangles. The \emph{span} of a node $v$ is the minimal interval
containing all intervals corresponding to the leaves of its subtree.
We store a rectangle $R$ at each node $v$ such that the
$x$-projection of $R$ contains the span of $v$ but does not contain
the span of the parent of $v$. The tree has $O(N)$ nodes, each
rectangle is stored at $O(\log N)$ nodes, and the size of the
structure is thus $O(N\log N)$. All the rectangles containing a
query point $q$ must be stored at the nodes on the search path of
the $x$-coordinate of $q$ in the tree.

For each node $u$ of $S$ we take the set $\M_u$ of rectangles stored
at $u$, and construct a secondary segment tree $S_u$, storing the
$y$-projections of the rectangles of $\M_u$. The total size and the
preprocessing time of the resulting two-dimensional segment tree
is $O(N\log^2 N)$. We can retrieve all rectangles containing a query
point $q$ by traversing the search path $\pi$ of
(the $x$-coordinate of) $q$ in the primary tree, and then by
traversing the search paths of (the $y$-coordinate of) $q$ in each
of the secondary trees associated with the nodes along $\pi$.
The rectangles stored at the secondary nodes along these paths are
exactly those that contain $q$. If we store at each secondary node
only the rectangle of largest area among those assigned to that node,
we can easily find the largest-area rectangle of $\M$ containing a
query point, in time $O(\log^2N)$.
Storing only one rectangle at each secondary node reduces the size
of the segment tree to $O(N\log N)$, but the preprocessing time
remains $O(N\log ^2 N)$.

This simple-minded solution will be efficient only when the size of
$\M$ is linear or nearly linear in $n$. Unfortunately, as already
noted, in general the number of maximal empty rectangles can be
quadratic in the input size, so for most of them we will need an
additional, implicit representation. The structure that we will use
for this purpose is a \emph{partial inverse Monge matrix}, so we
first provide a brief background on Monge matrices.

\paragraph{Monge matrices: A brief review.}
A matrix $M$ is a \emph{Monge matrix} (resp., an \emph{inverse Monge
matrix}) if for every pair of rows $i < j$ and every pair of columns
$k < \ell$ we have $M_{i k} + M_{j \ell} \le M_{i \ell} + M_{j k}$
(resp., $M_{i k} + M_{j \ell} \ge M_{i \ell} + M_{j k}$). A matrix
is {\em totally monotone} if, for every pair of rows $i < j$ and
every pair of columns $k < \ell$, $M_{ik} \le M_{i \ell}$ implies
$M_{jk} \le M_{j \ell}$. It is easy to verify that an inverse Monge
matrix is totally monotone. Aggarwal \emph{et al.}~\cite{SMAWK87}
gave an algorithm for finding all row maxima in a totally monotone
$m\times n$ matrix in $O(m+n)$ time. A \emph{partial totally
monotone matrix} is a matrix some of whose entries are undefined,
but it satisfies the total monotonicity condition for every pair of
rows $i < j$ and every pair of columns $k < \ell$ for which all four
entries $M_{i k}$, $M_{i \ell}$, $M_{j k}$, and $M_{j \ell}$, are
defined. Klawe and Kleitman \cite{KlaweK90} give an $O(n\alpha(m) +
m)$-time algorithm for finding all row maxima of {\em staircase
totally monotone matrices}. These are partial totally monotone
matrices in which the defined part of each row is contiguous,
starting from the first column, and the defined part of each row is
not smaller than the defined part of the preceding row. Klawe
\cite{Klawe92} also present an $O(n\log\log m + m)$-time algorithm
for finding all row maxima in \emph{skyline totally monotone
matrices}, where the defined part of each column is contiguous
starting from the bottommost row. All these algorithms for partially
totally monotone matrices can find row minima instead of row maxima
within the same time bound.

\section{The Data Structure}
\label{sec:rect}

Let $P$ be a set of $n$ points inside an axis-parallel rectangular
region $B$ in the plane. Recall that our goal is
to preprocess $P$ into a data
structure, so that, given a query point $q\in B$, we can
efficiently find the largest-area axis-parallel $P$-empty
rectangle containing $q$ and contained in $B$.

\subsection{Maximal rectangles with edges on the boundary of $B$}
\label{sec:boundary}

Let $e_t$, $e_b$, $e_\ell$, and $e_r$ be the top, bottom, left, and
right edges of $B$, respectively. We classify the maximal $P$-empty
rectangles within $B$ according to the number of their edges that
touch the edges of $B$. We show that there are only $O(n)$ maximal
$P$-empty rectangles with at least one edge on $\bd B$. We
precompute these rectangles and store them in a two-dimensional
segment tree $S$, as described above. At query time we find the
rectangle of largest area among those special ``anchored''
rectangles that contain the query point $q$, by searching with $q$
in $S$. (The segment tree $S$ will also store additional rectangles
that will arise in later steps of the construction; see below for
details.)

Here is the classification and analysis of maximal $P$-empty
rectangles $R$ with at least one edge on $\bd B$.

\paragraph{(i) Three edges of $R$ lie on $\bd B$.}
It is easy to verify that there are only four such rectangles,
one for each triple of edges of $B$.

\paragraph{(ii) Two adjacent edges of $R$ lie on $\bd B$.}
Suppose, without loss of generality, that the top and right edges of
$R$ lie on $e_t$ and $e_r$, respectively. The other two edges of $R$
must be supported by a pair of \emph{maxima} of $P$ (that is, points
$p\in P$ for which no other point $q\in P$ satisfies $x_q > x_p$ and
$y_q > y_p$), consecutive in the sorted order of the maxima (by
their $x$- or $y$-coordinates). Since there are $O(n)$ pairs of
consecutive maxima, the number of anchored rectangles of this kind
is also $O(n)$. See Figure~\ref{mx2a}. The other three situations
are handled in a fully symmetric manner.

\begin{figure}[htb]
\begin{center}
\input{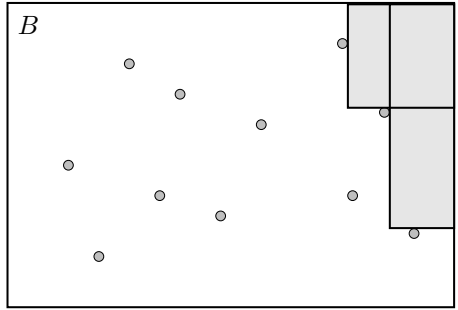}
\caption{Maximal $P$-empty rectangles with two adjacent edges on $\bd B$.}
\label{mx2a}
\end{center}
\end{figure}

\paragraph{(iii) Two opposite edges of $R$ lie on two opposite edges of $B$.}
Suppose, without loss of generality, that the left and right edges of
$R$ lie on $e_\ell$ and $e_r$, respectively. In this case the top and
bottom edges of $R$ must be supported by two points of $P$,
consecutive in their $y$-order. Clearly, there are $O(n)$ such pairs,
and thus also $O(n)$ rectangles of this kind. Again, handling the top
and bottom edges of $B$ is done in a fully symmetric manner.
See Figure~\ref{mx2b}.

\begin{figure}[htb]
\begin{center}
\input{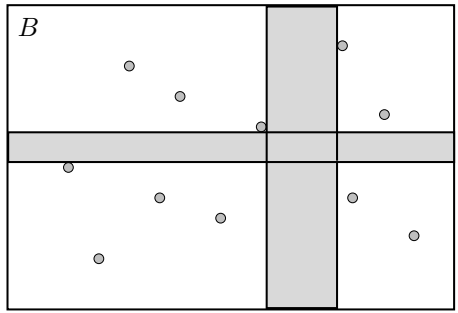}
\caption{Maximal $P$-empty rectangles with two opposite edges on $\bd B$.}
\label{mx2b}
\end{center}
\end{figure}

\paragraph{(iv) One edge of $R$ lies on $\bd B$.}
Suppose, without loss of generality, that the right edge of $R$ lies
on $e_r$. Then the three other sides of $R$ must be supported by
points of $P$. For each point $p\in P$ there is a unique maximal
$P$-empty rectangle whose right edge lies on $e_r$ and whose left edge
passes through $p$. This rectangle is obtained by connecting $p$ to
$e_r$ by a horizontal segment $h$ and then by translating $h$ upwards
and downwards until it first hits two respective points of $P$, or
reaches $\bd B$. (In the latter situations we obtain rectangles of the
preceding types.) Hence there are $O(n)$ rectangles of this kind too.

\begin{figure}[htb]
\begin{center}
\input{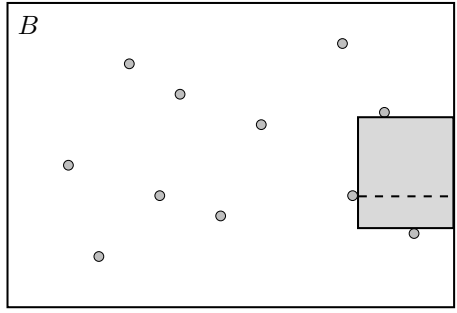}
\caption{A maximal $P$-empty rectangle with one edge on $\bd B$.}
\label{mx1}
\end{center}
\end{figure}

It is easy to compute all the maximal $P$-empty anchored rectangles
of the above four classes, in overall $O(n\log n)$ time: Computing
rectangles of type (i) and (iii) only requires sorting the points by
their $x$ or $y$ coordinates. Computing rectangles of type (ii) (of
the specific kind depicted in Figure \ref{mx2a}) requires computing
the list of maximal points. This can be done by scanning the points
from right to left maintaining the highest point seen so far. A
point $p$ is maximal if and only if it is higher than the previous
highest point. We can compute the rectangles of type (iv) (of the
specific kind depicted in Figure \ref{mx1}) also by traversing the
points from right to left while maintaining the points already
traversed, sorted by their $y$-coordinates, in a balanced search
tree. When we process a point $p$ then its successor and predecessor
(if they exist) in the tree define the top and bottom edges of the
rectangle of type (iv) whose left edge passes through $p$.

 We collect these
rectangles and store them in our two-dimensional segment tree $S$.
Given a query point $q$, we can find the rectangle of largest area
containing $q$ among these rectangles by searching in $S$, as
explained above, in $O(\log^2n)$ time.

\subsection{Maximal empty rectangles supported by four points of $P$}

In the remainder of the paper we are concerned only with maximal
$P$-empty rectangles supported by four points of $P$, one on each
side of the rectangle. We refer to such rectangles as \emph{bounded}
$P$-empty rectangles. We note that the number of such rectangles can
be $\Theta(n^2)$ in the worst case; see Figure~\ref{stair} for an
illustration of the lower bound.  The upper bound follows by
observing that there is at most one maximal $P$-empty rectangle
whose top and bottom edges pass through two respective specific
points of $P$. (To see this, take the rectangle having these points
as a pair of opposite vertices and, assuming it to be $P$-empty,
expand it to the left and to the right until its left and right
edges hit two additional respective points.) Handling these
(potentially quadratically many) rectangles has to be done
implicitly, in a manner that we now proceed to describe.

\begin{figure}[htb]
\begin{center}
\input{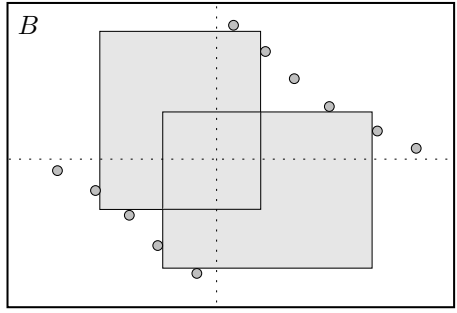}
\caption{A set $P$ of $n$ points with $\Theta(n^2)$ maximal
$P$-empty rectangles.}
\label{stair}
\end{center}
\end{figure}

We store the points of $P$ in a two-dimensional range tree
(see, e.g., \cite{Dutchbook}).
The points are stored at the leaves of the primary tree $T$ in their
left-to-right order.  For a node $u$ of $T$, we denote by $P_u$ the
subset of the points stored at the leaves of the subtree rooted at
$u$. We associate with each internal node $u$ of $T$ a
{\em vertical splitter} $\ell_u$, which is a vertical line
separating the points stored at
the left subtree of $u$ from those stored at the right subtree.
These splitters induce a hierarchical binary decomposition of the plane
into vertical strips.  The strip $\sigma_{\rm root}$ associated with
the root is the entire plane, and the strip $\sigma_u$ of a node $u$
is the portion of the strip of the parent $p(u)$ of $u$ which is
delimited by $\ell_{p(u)}$ and contains $P_u$.

With each  node $u$ in $T$ we associate a secondary tree $T_u$
containing the points of $P_u$ in a bottom-to-top order. For a node
$v$ of $T_u$, we denote by $P_v$ the points stored at the leaves of
the subtree rooted at $v$. We associate with each internal node $v$
of $T_u$ a {\em horizontal splitter} $\ell_v$, which is a horizontal
line separating the points stored at the left subtree of $v$ from
those stored at the right subtree. These splitters induce a
hierarchical binary decomposition of the strip $\sigma_u$ into
rectangles.  The rectangle associated with the root of $T_u$ is the
entire vertical strip $\sigma_u$, and the rectangle $B_v$ of a node
$v$ is the portion of the rectangle of the parent $p(v)$ of $v$
which is delimited by $\ell_{p(v)}$ and contains $P_v$. See
Figure~\ref{boxv}.

\begin{figure}[htb]
\begin{center}
\input{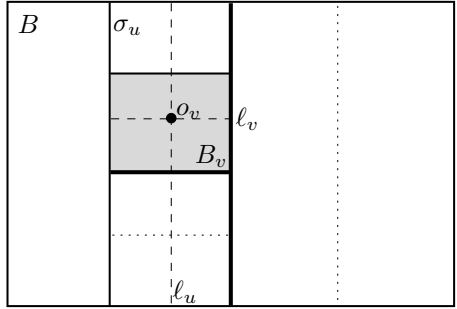}
\caption{The rectangle $B_v$ associated with a secondary node $v$,
with its splitters and origin.}
\label{boxv}
\end{center}
\end{figure}

In this way, the range tree defines a hierarchical subdivision of
the plane, so that each secondary node $v$ is associated with a
rectangular region $B_v$ of the subdivision. If $v$ is not a leaf
then it is associated with a horizontal splitter $\ell_v$. If the
primary node $u$ associated with the secondary tree of $v$ is also
not a leaf then $v$ is also associated with a vertical splitter
$\ell_u$. The vertical segment $\ell_u$ and the horizontal segment
$\ell_v$ meet at a point $o_v$ inside $B_v$, which we refer to as
the \emph{origin of $v$}.

A query point $q$ defines a search path $\pi_q$ in $T$ and a search
path in each secondary tree $T_u$ of a primary node $u$ on $\pi_q$.
We refer to the nodes on these $O(\log n)$ paths as constituting the
\emph{search set} of $q$, which therefore consists of $O(\log^2 n)$
secondary nodes.

Let $R$ be a bounded maximal $P$-empty rectangle containing $q$
supported by four points  $p_t$, $p_b$, $p_\ell$, and $p_r$ of $P$,
lying respectively on the top, bottom, left, and right edges of $R$.
Let $u$ be the lowest common ancestor of $p_\ell$ and $p_r$ in the
primary tree, and let $v$ be the lowest common ancestor of $p_t$ and
$p_b$ in $T_u$ (clearly, both $p_t$ and $p_b$ belong to $T_u$). By
construction, $R$ is contained in $B_v$ and contains both $q$ and
$o_v$. See Figure \ref{bddrect}. Note that both $v$ and $u$ are
internal nodes (each being an lowest common ancestor of two leaves)
so $o_v$ is indeed defined. Furthermore, one can easily verify that
$v$ is in the search set of $q$.

\begin{figure}[htb]
\begin{center}
\input{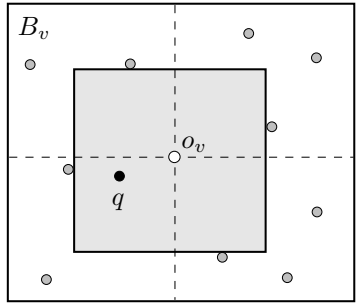}
\caption{A bounded maximal $P$-empty rectangle of the subproblem at
$v$.} \label{bddrect}
\end{center}
\end{figure}

In the following we consider only secondary nodes $v$ which are not
leaves, and are associated with primary nodes $u$ which are not
leaves.

We define the \emph{subproblem at a secondary node $v$} (of the
above kind) as the problem of finding the largest-area bounded
maximal $P$-empty rectangle containing $q$ and $o_v$ which lies in
the interior of $B_v$. It follows that if we solve each subproblem
at each secondary node $v$ in the search set of $q$, and take the
rectangle of largest area among those subproblem outputs, we get the
largest-area bounded maximal $P$-empty rectangle containing $q$.

In the remainder of this section we focus on the solution of a
single subproblem at a node $v$ of a secondary tree $T_u$ in the
search set of $q$. We focus only on the points in $P_v$ and for
convenience we extend $B_v$ to the entire plane and we move $o_v$ to
the origin. The line $\ell_u$ becomes the $y$-axis, and the line
$\ell_v$ becomes the $x$-axis. Put $n_v = |P_v|$. We classify the
bounded maximal $P$-empty rectangles contained in $B_v$ and
containing the origin according to the quadrants containing the four
points associated with them, namely, those lying on their boundary,
and find the largest-area rectangle containing $q$ in each class
separately.

\paragraph{(i) Three defining points in a halfplane.}
The easy cases are when one of the four halfplanes defined by the
$x$-axis or the $y$-axis (originally $\ell_u$ and $\ell_v$) contains
three of the defining points. Suppose for specificity that this is
the halfplane to the left of the $y$-axis; the other four cases are
treated in a fully symmetric manner. See Figure~\ref{half3}.
Consider the subset $P_\ell$ of points to the left of the $y$-axis.
For each point $p$ of $P_\ell$ there is (at most) a single rectangle
in this family such that $p$ is its left defining point. Similarly
to the analysis in case (iv) of Section \ref{sec:boundary}, we
obtain this rectangle by connecting  $p$ to the $y$-axis by a
horizontal segment, and shifting this segment up and down until it
hits two other respective points of $P_\ell$. Now we have a
rectangle bounded by three points of $P_\ell$ whose right edge is
anchored to the $y$-axis. Extend this rectangle to the right until
its right edge hits a point to the right of the $y$-axis, to obtain
the unique rectangle of this type with $p$ on its left edge. (Here,
and in the other cases discussed below, we assume that all the
relevant points of $P$ do exist; otherwise, the rectangle that we
construct is not fully contained in the interior of $B_v$. This
would be the case, for example, if the shift of the above rectangle
to the right of the $y$-axis does not encounter any point of $P_v$.)
Clearly, there are $O(n_v)$ bounded maximal empty rectangles of this
type.

\begin{figure}[htb]
\begin{center}
\input{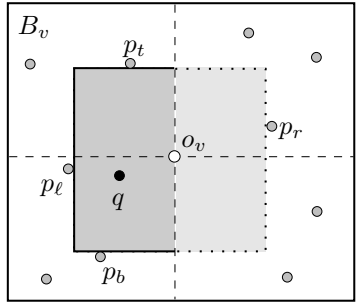}
\caption{A bounded maximal $P$-empty rectangle with three defining
points to the left of the $y$-axis.}
\label{half3}
\end{center}
\end{figure}

We can find the part of each such rectangle which is to the left of
the $y$-axis, by sweeping $B_v$ with a vertical line from the
$y$-axis to the left maintaining the points already seen in a
balanced search tree, exactly as we computed the empty rectangles of
type (iv) in Section \ref{sec:boundary}. To find the part of each
rectangle which is to the right of the $y$-axis we store the points
to the right of the $y$-axis, sorted by their $y$-coordinates, in a
 search tree $\Sigma_r$. With each node of $\Sigma_r$ we store
the leftmost point stored in its subtree. We can identify the right
edge of each rectangle $R$ by  using $\Sigma_r$ to find, in
logarithmic time, the leftmost point to the right of the $y$-axis
between the top and the bottom edges of $R$.

Overall we can find all rectangles of this type associated with $v$
in $O(n_v\log n_v)$ time.  Summing this cost over all secondary
nodes $v$, we obtain a total of $O(n\log^2 n)$ such rectangles,
which can be constructed in $O(n\log^3 n)$ overall time.

We add all these rectangles to the global segment tree $S$. The size
of the expanded tree $S$ remains $O(n\log n)$ since it still
suffices to store only the largest-area rectangle among all
rectangles associated with each secondary node. The preprocessing
time increases to $O(n\log^4 n)$ since each of the $O(n\log^2 n)$
rectangles is mapped to $O(\log^2 n)$ secondary nodes of $S$, and
for each rectangle $R$ and a node $u$ to which $R$ is mapped,  we
need to check whether $R$ is the largest rectangle mapped to $u$. A
query in $S$ still takes $O(\log^2 n)$ time.

\bigskip

The remaining cases involve bounded maximal $P$-empty rectangles $R$
such that each of the four halfplanes defined by the $y$-axis or the
$x$-axis contains exactly two defining points of $R$. This can
happen in two situations: either there exist two opposite quadrants,
each containing two defining points of $R$, or each quadrant
contains exactly one defining point of $R$.

\paragraph{(ii) One defining point in each quadrant.}
The situation in which each quadrant contains exactly one defining
point of $R$ is also easy to handle, because again there are only
$O(n_v)$  bounded maximal $P$-empty rectangles of this type in
$B_v$. To see this, consider, without loss of generality, the case
where the first quadrant contains the right defining point, $p_r$,
the second quadrant contains the top defining point, $p_t$, the
third quadrant contains the left defining point, $p_\ell$, and the
fourth quadrant contains the bottom defining point, $p_b$. See
Figure~\ref{quad1111}. (There is one other situation, in which the
top defining point lies in the first quadrant, the right point in
the fourth quadrant, the bottom point in the third, and the left
point in the second; this case is handled in a fully symmetric
manner.)

\begin{figure}[htb]
\begin{center}
\input{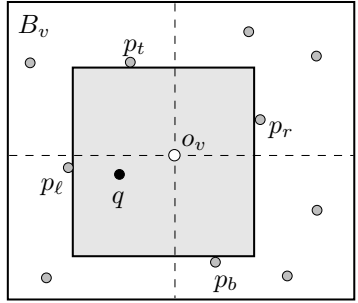}
\caption{A bounded maximal $P$-empty rectangle with one defining
point in each quadrant.}
\label{quad1111}
\end{center}
\end{figure}

We claim that $p_r$ can be the right defining point of at most one
such rectangle. Indeed, if $p_r$ is the right defining point of such
a rectangle $R$ then $p_b$ is the first point we hit when we sweep
downwards a horizontal line segment connecting $p_r$ to the $y$-axis
(assuming the sweep reaches below the $x$-axis; otherwise $p_r$
cannot be the right defining point of any rectangle of the current
type). Similarly, the point $p_\ell$ is the first point that we hit
when we sweep to the left a vertical line segment connecting $p_b$
to the $x$-axis, $p_t$ is the first point we hit when we sweep
upwards a horizontal line segment connecting $p_\ell$ to the
$y$-axis, and finally $p_r$ is the first point we hit when we sweep
a vertical line segment connecting $p_t$ to the $x$-axis. As noted,
if any of the points we hit during these sweeps is not in the
correct quadrant, or the last sweep does not hit $p_r$ (e.g.,
because the point $p_t$ is lower than $p_r$), or one of the sweeps
does not hit any point before hitting $\bd B_v$, then $p_r$ is not
the right defining point of any rectangle of this type.

We compute these $O(n_v)$ bounded maximal $P$-empty rectangles using
four balanced search trees. As for rectangles of type (i) we
maintain the points to the right of the $y$-axis in a balanced
search tree $\Sigma_r$, sorted by their $y$ coordinates, storing
with each node the leftmost point in its subtree. Similarly, we
maintain the points below the $x$-axis in a balanced search tree
$\Sigma_b$ sorted by their $x$ coordinates, storing with each node
the topmost point in its subtree. We maintain the points to the left
of the $y$-axis and the points above the $x$-axis in symmetric
search trees $\Sigma_\ell$ and $\Sigma_t$, respectively. We can find
each rectangle in this family by four queries, starting with each
point $p_r$ in the first quadrant, first in $\Sigma_b$ to identify
$p_b$, then in $\Sigma_\ell$ to find $p_\ell$, in $\Sigma_t$ to find
$p_t$, and finally in $\Sigma_r$ to ensure that we get back to
$p_r$.

Summing over all secondary nodes $v$, we have $O(n\log^2 n)$ such
rectangles, which we can construct in $O(n\log^3 n)$ overall time.
We add them too to the global segment tree $S$, without changing the
asymptotic bounds on its performance parameters, as discussed
earlier.

\subsection{Two defining points in the first and third quadrants}

The hardest case is where, say, each of the first and third
quadrants contains two defining points of $R$.
(The case where each of the second and fourth quadrants contains
two defining points is handled symmetrically.) The defining points
in the first (resp., third) quadrant are consecutive minimal
(resp., maximal) points of the subset of $P_v$ in that quadrant.
There are $O(n_v)$ such pairs.
Denote the sequence of maximal points of the third quadrant by $E$,
and the sequence of minimal points of the first quadrant by $F$,
both sorted from left to right (or, equivalently, from top to bottom).

Consider a consecutive pair $(a,b)$ in $E$ (with $a$ to the left and
above $b$).  Let $M_1$ be the unique maximal $P$-empty rectangle whose
right edge is anchored at the $y$-axis, its left edge passes through
$a$, its bottom edge passes through $b$, and its top edge passes
through some point $c$ (in the second quadrant); it is possible that
$c$ does not exist, in which case some minor modifications (actually,
simplifications) need to be applied to the forthcoming analysis, which
we do not spell out.

Let $M_2$ be the unique maximal empty rectangle whose top edge is
anchored at the $x$-axis, its left edge passes through $a$, its
bottom edge passes through $b$, and its right edge passes through
some point $d$ (in the fourth quadrant; again, we ignore the case
where $d$ does not exist).  See Figure \ref{rectbd}.
Our maximal empty rectangle cannot
extend higher that $c$, nor can it extend to the right of $d$.
Hence its two other defining points must be a pair $(w,z)$ of
consecutive elements of $F$, both lying to the
left of $d$ and below $c$. The minimal points which satisfy these
constraints form a contiguous subsequence of $F$.

\begin{figure}[htb]
\begin{center}
\input{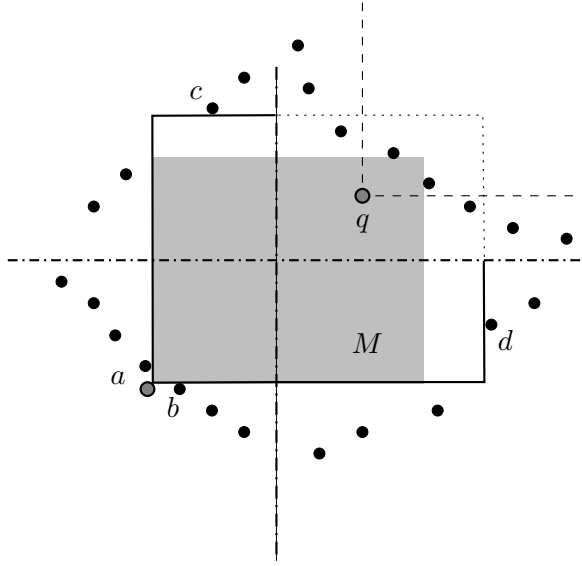}
\caption{The structure of maximal empty rectangles with two defining
points in each of the first and third quadrants.}
\label{rectbd}
\end{center}
\end{figure}

That is, for each consecutive pair $\rho=(a,b)$ of points of $E$
we have a contiguous ``interval'' $I_\rho\subseteq F$,
so that any consecutive pair $\pi=(w,z)$ of points in
$I_\rho$ defines with $\rho$ a maximal empty rectangle which
contains the origin, and these are the only pairs which can define
with $\rho$ such a rectangle. (Note that we can ignore the ``extreme''
rectangles  defined by $a,b,c$, and the highest point of $I_\rho$, or by
$a,b,d$, and the lowest point of $I_\rho$, since these rectangles have
three of their defining points in a common halfplane defined by the
$x$-axis or by the $y$-axis, and have therefore already been treated.)

To answer queries with respect to these rectangles, we process the
data as follows. We compute the chain $E$ of maximal points in the
third quadrant and the chain $F$ of minimal points in the first
quadrant, ordered as above. This is done in $O(n_v \log n_v)$ time
in the same way as we computed the chain of maximal points of $P$ in
Section \ref{sec:boundary}.
%For example, $E$ can be
%constructed by scanning the points in the third quadrant from right
%to left maintaining the highest point seen so far; a point belongs
%to $E$ if and only if it is higher than the previous highest point.
 For each pair
$\rho=(a,b)$ of consecutive points in $E$ we compute the
corresponding delimiting points $c$ (in the second quadrant) and $d$
(in the fourth quadrant). Formally, $c$ is the lowest point in the
second quadrant which lies to the right of $a$, and $d$ is the
leftmost point in the fourth quadrant which lies above $b$. We then
use $c$ and $d$ to ``carve out'' the interval $I_\rho$ of $F$,
consisting of those points that lie below $c$ and to the left of
$d$. We can find $c$ by a binary search in the chain of $y$-minimal
and $x$-maximal points in the second quadrant, and find $d$ by a
binary search in the chain of $x$-minimal and $y$-maximal points in
the fourth quadrant. These chains can be computed in the same way as
in the construction of $E$ and $F$. Once we have the chains we can
find, for each consecutive pair $\rho=(a,b)$ in $E$, the
corresponding entities $c$, $d$, and $I_\rho$, in $O(\log n_v)$
time.

We next define a matrix $A$ as follows. Each row of $A$ corresponds
to a pair $\rho$ of consecutive points in $E$ and each column of $A$
corresponds to a pair $\pi$ of consecutive points in $F$. If at
least one point of $\pi$ is not in $I_\rho$ then the value of
$A_{\rho\pi}$ is undefined. Otherwise, it is equal to the area of
the (maximal empty) rectangle defined by $\rho$ and $\pi$. By the
preceding analysis, the defined entries in each row form a
contiguous subsequence of columns. It is easy to verify that if
$\rho_2$ follows (i.e., lies more to the right and below) $\rho_1$
on $E$ then the left (resp., right) endpoint of $I_{\rho_2}$ cannot
be to the right of the left (resp., right) endpoint of $I_{\rho_1}$;
See Figure~\ref{overlap}. It follows that in each column of $A$ the
defined entries also form a contiguous subsequence of rows. We refer
to such partially defined matrix $A$ in which the defined part of
each row and of each column is consecutive as a {\em double
staircase} matrix.

\begin{figure}[htb]
\begin{center}
\input{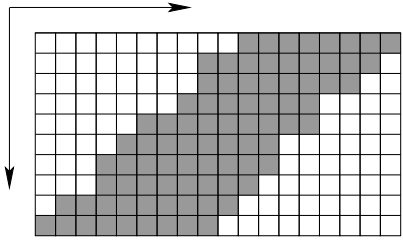}
\caption{The structure of the defined portion of $A$.}
\label{overlap}
\end{center}
\end{figure}

The following simple lemma plays a crucial role in our analysis.

%%%%%%%%%%%%%%%%%%%%%%%%%%%
\begin{lemma} \label{monge}
Let $x_1$, $x_2$, $y_1$, $y_2$ be four points in the plane, so that
$x_1$ and $x_2$ lie in the first quadrant, $y_1$ and $y_2$ lie in
the third quadrant, $x_1$ lies northwest to $x_2$, and $y_1$ lies
northwest to $y_2$. For any point $w$ in the third quadrant and any
point $z$ in the first quadrant, let $R(w,z)$ denote the rectangle
having $w$ and $z$ as opposite corners, and let $A(w,z)$ denote
the area of $R(w,z)$. Then we have
\begin{equation} \label{eq:monge}
A(y_1,x_1) + A(y_2,x_2) > A(y_1,x_2) + A(y_2,x_1) .
\end{equation}
\end{lemma}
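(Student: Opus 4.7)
The plan is to verify the inequality by a direct coordinate computation. Write $x_1=(\alpha_1,\beta_1)$ and $x_2=(\alpha_2,\beta_2)$, both lying in the first quadrant; the northwest relation $x_1\prec x_2$ then means $\alpha_1<\alpha_2$ and $\beta_1>\beta_2$. Symmetrically, write $y_1=(\gamma_1,\delta_1)$ and $y_2=(\gamma_2,\delta_2)$ in the third quadrant, with $\gamma_1<\gamma_2$ and $\delta_1>\delta_2$. Since each $y_i$ lies southwest of each $x_j$ (third vs.\ first quadrant), the rectangle $R(y_i,x_j)$ has $y_i$ as its lower-left corner and $x_j$ as its upper-right corner, so its area is the product of the side lengths:
\[
A(y_i,x_j)=(\alpha_j-\gamma_i)(\beta_j-\delta_i).
\]

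Next, I would compute the difference
\[
D:=A(y_1,x_1)+A(y_2,x_2)-A(y_1,x_2)-A(y_2,x_1)
\]
by expanding the four products above. The pure quadratic terms $\alpha_j\beta_j$ (coming from $R(y_1,x_1)$ and $R(y_2,x_2)$) are matched exactly by the same terms coming from $R(y_2,x_1)$ and $R(y_1,x_2)$, and similarly the $\gamma_i\delta_i$ terms cancel. What survives are only the mixed cross terms, which group cleanly as
\[
D=(\alpha_2-\alpha_1)(\delta_1-\delta_2)+(\gamma_2-\gamma_1)(\beta_1-\beta_2).
\]

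Finally, each of the four factors on the right-hand side is strictly positive: $\alpha_2-\alpha_1>0$ and $\beta_1-\beta_2>0$ because $x_1$ is strictly northwest of $x_2$, while $\gamma_2-\gamma_1>0$ and $\delta_1-\delta_2>0$ because $y_1$ is strictly northwest of $y_2$. Hence $D>0$, which is exactly the claimed strict inverse Monge inequality~\eqref{eq:monge}. There is no genuine obstacle in the proof; the only things to keep straight are the sign conventions implied by ``northwest'' (smaller $x$-coordinate, larger $y$-coordinate) and the identification of which corner of $R(y_i,x_j)$ is which, so that the product formula for the area is written correctly. A purely geometric proof via inclusion–exclusion on the four rectangles is also possible, but the algebraic expansion is the shortest route.
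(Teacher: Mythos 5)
Your proof is correct and is essentially the paper's argument carried out in coordinates: the paper reads the identity $A(y_1,x_1)+A(y_2,x_2)=A(y_1,x_2)+A(y_2,x_1)+A_1+A_2$ off a figure, where $A_1$ and $A_2$ are the areas of two small shaded rectangles, and your two surviving cross terms $(\alpha_2-\alpha_1)(\delta_1-\delta_2)$ and $(\gamma_2-\gamma_1)(\beta_1-\beta_2)$ are precisely those two areas. The sign bookkeeping (northwest means smaller abscissa and larger ordinate, and $y_i$ is the lower-left corner of $R(y_i,x_j)$ since the quadrants are separated) is handled correctly, so nothing is missing.
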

%%%%%%%%%%%%%%%%%%%%%%%%%%%
\begin{proof}
The situation is depicted in Figure~\ref{shmonge}. In the notation
of the figure we have
$$
A(y_1,x_1) + A(y_2,x_2) = A(y_1,x_2) + A(y_2,x_1) + A_1 + A_2 ,
$$
where $A_1$ and $A_2$ are the areas of the two shaded rectangles.
\end{proof}

\begin{figure}[htb]
\begin{center}
\input{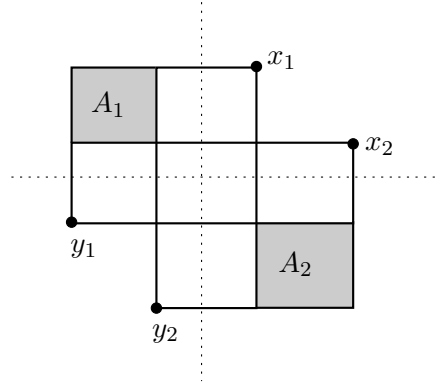}
\caption{The inverse Monge property of maximal rectangles.}
\label{shmonge}
\end{center}
\end{figure}

Lemma~\ref{monge} asserts that if $A_{\rho_1 \pi_1}$,
$A_{\rho_2 \pi_2}$, $A_{\rho_1 \pi_2}$, and $A_{\rho_2 \pi_2}$,
for $\rho_1 < \rho_2$ and $\pi_1 < \pi_2$, are all defined then
$$
A_{\rho_1 \pi_1} + A_{\rho_2 \pi_2} >
A_{\rho_1 \pi_2} + A_{\rho_2 \pi_1} ,
$$
or, equivalently,
\begin{equation} \label{eq:inv-monge}
A_{\rho_1 \pi_1} - A_{\rho_2 \pi_1} >
A_{\rho_1 \pi_2} - A_{\rho_2 \pi_2} .
\end{equation}
Hence $A$ satisfies the  inverse Monge property, with respect to its
defined entries, so it is a partial inverse Monge (and thus also
totally monotone)  matrix.

\subsubsection{Answering a query in the first (or third) quadrant}
\label{s13}

The next step is to compute the column maxima in $A$. That is, for
each pair $\pi$ of consecutive points in $F$, we compute the value
$A_\mx(\pi) = \max_\rho A_{\rho\pi}$, where $\rho$ ranges over all
consecutive pairs in $E$ for which $\pi$ is contained in $I_\rho$.
The fact that $A$ is only partially defined makes this task slightly
more involved than the similar task for totally defined inverse
Monge matrices. Intuitively, this computation is similar to the
construction of an \emph{upper envelope} of pseudo-segments in the
plane. Indeed, we can think of the entries of a particular row
$\rho$ as forming the graph of a (discrete) partially defined
function $\hat{A}_\rho(\cdot)$, mapping indices $\pi$ of columns to
the areas $A_{\rho\pi}$ of the corresponding rectangles. Equation
(\ref{eq:inv-monge}) implies that these functions behave as
pseudo-segments. Specifically,  we extend the  domain of definition
of each function $\hat{A}_\rho$ to a segment, delimited by the first
and last pairs $\pi$ at which $A_{\rho\pi}$ is defined, by linearly
interpolating between each pair of $\pi$-consecutive points on its
graph. Then (\ref{eq:inv-monge}) implies that each pair of the
resulting connected polygonal curves intersect at most once.

The complexity of the upper envelope of $m$ pseudo-segments is
$O(m\alpha(m))$ (see \cite{SA}). More precisely, this expression
bounds the number of \emph{breakpoints} of the envelope (points
where two distinct graphs intersect on the envelope; for technical
reasons we also regard the leftmost and rightmost points of each
graph as breakpoints), and ignores the complexity of the individual
functions (the graph of each of our functions consists of many
segments, one fewer than the number of columns where $A$ is defined
at the corresponding row, and these individual complexities are
ignored in the bound above); this comment is crucial for the
complexity analysis of our procedure. Since we can find the
intersection of any pair of pseudo-segments $\hat{A}_{\rho_1}$ and
$\hat{A}_{\rho_2}$ in $O(\log n)$ time, by a binary search through
the relevant columns, we can compute this upper envelope in
$O(m\alpha(m)\log m \log n)$ time, by a simple divide-and-conquer
algorithm, or in $O(m\log m\log n)$ time, using the more elaborate
algorithm of Hershberger~\cite{Hershberger89}.

In accordance with the remark made in the preceding paragraph, we
note that in the algorithm just sketched we do not attempt to
compute and output the upper envelope explicitly---this will take
$\Omega(n)$ time at each recursive step, for filling in the value of
the envelope at every column. Instead, we only compute its
\emph{breakpoints}, which partition the columns into $O(m\alpha
(m))$ blocks, so that for all $\pi$ in the same block, $\max_\rho
A_{\rho\pi}$ is attained by the same row. This implicit
representation is significantly cheaper when $m \ll n$, and is
crucial to obtain the running time asserted above.

\medskip

\noindent {\bf Remark.} Rather than adapting the divide-and-conquer
algorithm for upper envelopes,  just mentioned above, to our
discrete settings, we can use an algorithm of Klawe \cite{Klawe92}
for computing row maxima in staircase (inverse) Monge matrices.
Recall that a \emph{staircase matrix} is a partially defined matrix
in which the defined portion of each row is contiguous starting at
the leftmost column, and the defined part of each row is not smaller
than the defined part of the preceding row. We shall also refer as a
staircase matrix to a matrix that can be made staircase by inverting
the order of the rows and the columns. (Note that the operation of
inverting the order of the rows and the columns preserves the
inverse Monge property.)
 Klawe shows how to find
row maxima in  a staircase totally monotone matrix in $O(n\alpha(m)+
m)$ time, where $m$ is the number of rows and $n$ is the number of
columns.

\begin{figure}[htb]
\begin{center}
\input{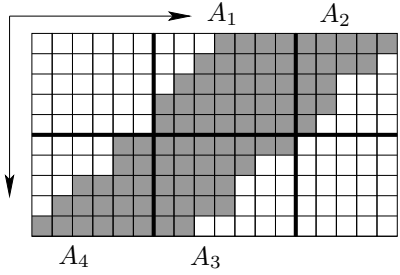}
\caption{The recursive construction using Klawe's algorithm.}
\label{klawe}
\end{center}
\end{figure}

We can use Klawe's algorithm as follows; refer to
Figure~\ref{klawe}. We split the matrix into four submatrices
$A_1,\ldots,A_4$ at the middle row, where $A_1$ and $A_2$ are formed
by the first half of the rows and $A_3$ and $A_4$ are formed by the
second half. We take the contiguous block of columns whose defined
portions intersect the middle row, split each of these columns at
the middle row, and form $A_1$ from the top parts of these columns
and $A_3$ from from the bottom parts. The submatrix $A_2$ (resp.,
$A_4$) is defined by the rows above (resp., below)  the middle row
and by the columns whose defined portions are fully contained in
this range of rows.
 Clearly, $A_1$ and $A_3$ are
two staircase submatrices (one straight and one inverted).

It follows that we can find column maxima in $A_1$ and $A_3$
 by two applications of Klawe's algorithm to $A_1^t$ and
$A_3^t$, and then by taking the maximum of the two relevant outputs
for each column. (Formally, Klawe's algorithm finds row maxima and
we need column maxima, but since the transpose operation preserves
the inverse Monge property, the application of Klawe's  algorithm to
the transposed matrix yields the desired column maxima.) We then
recursively apply the algorithm to the submatrices $A_2$, $A_4$.
Note that $A_2$ and $A_4$ are  disjoint submatrices of $A$, each
with half as many rows, and their column ranges are disjoint, and
also disjoint from the column ranges of $A_1$ and $A_3$. This is
easily seen to imply that the running time of this recursive
algorithm is $O(m  \alpha(n) \log m + n)$.

Although this algorithm is faster (when $n=O(m)$) than the one based
on computing the upper envelope of pseudo-segments, the latter will
be used again later, when we show how to handle query points in the
second or fourth quadrants.

\medskip

By construction, the upper envelope of the pseudo-segments
corresponding to the rows of $A$ records the column maxima of $A$.
Specifically, we scan the upper envelope from left to right, and the
maximum for each column occurs at the row that attains the upper
envelope at that column. (We can afford to perform this scan once,
upon termination of the whole procedure, but not at each of the
recursive steps of constructing sub-envelopes.)

After computing the column maxima in $A$ we build a range-maxima
data structure storing these column maxima, so that we can
efficiently retrieve the maximum in any query contiguous subsequence
of the columns. Such a structure can be constructed in time (and
storage) linear in the number of columns, and a query can be
answered in $O(1)$ time (see \cite{Bender:2000} and the reference
therein for the original results). For our purpose, though, since we
have to search $F$ to identify the interval of columns that the
query point $q$ ``controls'', we might as well use a standard binary
search tree over the columns, instead of the more sophisticated
structure of \cite{Bender:2000}.  We store in each subtree of the
tree the maximum of the column maxima, over all columns stored at
the subtree, which allows us to find the maximum in a query interval
of columns in $O(\log n)$ time.

The query point $q$ itself, if it lies in the first quadrant,
defines a contiguous subsequence $J_q$ of the sequence $F$ of
minimal points in the first quadrant, namely, those that lie above
$q$ and to its right. Only consecutive pairs within this subsequence
can form the top and right defining points of a maximal empty
rectangle containing $q$ of the type considered here. So we compute
$J_q$, in logarithmic time, and compute $\max_\pi A_\mx(\pi)$, over
all pairs $\pi$ contained in $J_q$, using the range-maxima data
structure just described, and output the corresponding rectangle.

As described so far we need two search trees, one is the
range-maxima data structure over the column maxima of $A$, and the
other is a search tree over $F$ which we use to identify the
subsequence $J_q$ carved out from $F$ by a query point $q$. Since a
column in $A$ corresponds to a pair of consecutive points of $F$ we
can in fact use only one search tree for both purposes. This search
tree is over the points in $F$ and it stores in each node $v$ the
largest of the  column maximum of the columns which are associated
with pairs of consecutive points in the subtree of $v$.

%Note that if we use as keys in this search tree the coordinates of
%the pairs points of $F$ defining the columns then we can use the
%same balanced search tree to identify $J_q$ and to find $\max_\pi
%A_\mx(\pi)$, over all pairs $\pi$ contained in $J_q$. We then simply
%search the tree with the coordinates of $q$, Identify the subtrees
%that cover the columns in $J_q$ and find the largest rectangle whose
%upper right corner is defined by a pair corresponding to a column in
%$J_q$.

A query with a point in the third quadrant is handled in a fully
symmetric manner, using a symmetric data structure in which the roles
of $E$ and $F$ are interchanged. The cases where the query is in
the second or fourth quadrants will be considered next.

This  structure for queries in the first quadrant uses an additional
binary search tree
 for range maxima queries, for each secondary node in
$T$. The total size of these structures is $O(n\log^2 n)$, and they
can all be constructed in $O(n\log^4 n)$ overall time, using Hershberger's
algorithm \cite{Hershberger89} (In each secondary node $v$ we need $O(n_v\log^2 n_v)$ time
 to compute the upper
envelope of the functions defined by the rows of the matrix
associated with $v$.) A query takes $O(\log^3 n)$ time, because we
spend logarithmic time at each secondary node $v$ of $T$ on the
search paths, for which $q$ is in the first or third quadrant of
$B_v$.

\subsubsection{Answering a query in the second (or fourth) quadrant}
\label{s24}

Consider next the case where $q$ is in the second quadrant of $B_v$
(the case where $q$ is in the fourth quadrant is handled in a
symmetric manner). Consider the prefix $F_q$ of $F$ consisting of
points whose $y$-coordinate is larger than that of $q$, and the
prefix $E_q$ of $E$ consisting of points whose $x$-coordinate is
smaller than that of $q$. The rectangles defined by pairs of
consecutive points in $E$ and in $F$ which contain $q$ are exactly
those defined by pairs with at least one point in $E_q$ and one
point in $F_q$. See Figure~\ref{secquad}.

\begin{figure}[htb]
\begin{center}
\input{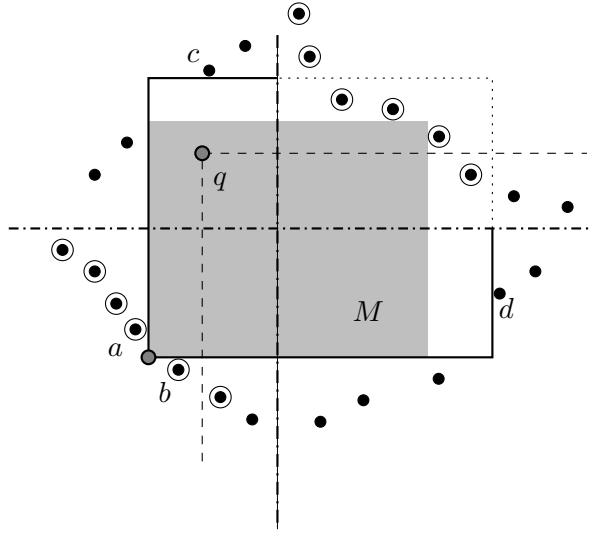}
\caption{Querying with a point in the second quadrant. The highlighted
points form the prefixes $E_q$ and $F_q$, plus one extra point in each
chain.}
\label{secquad}
\end{center}
\end{figure}

Here is an overview of our approach. We use the same matrix $A$
defined in the preceding subsection. We store the rows of $A$ in a
balanced binary search tree $T_h$. Each node $u$ of $T_h$ stores the
upper envelope $\E_u$ of the pseudo-segments corresponding to the
rows in the subtree of $u$. Given a query $q$ in the second
quadrant, we compute $E_q$, retrieve the pair $\rho_q$ formed by the
last point of $E_q$ and the next point of $E$ (if such a point
exists; otherwise we form the last pair in $E_q=E$), and represent
the first $\rho_q$ rows of $A$ as the disjoint union of $O(\log
n_v)$ canonical subsets of rows, corresponding to a collection $N_q$
of $O(\log n_v)$ respective nodes of $T_h$. See
Figure~\ref{canonenv} for a schematic depiction of this structure.

\begin{figure}[htb]
\begin{center}
\input{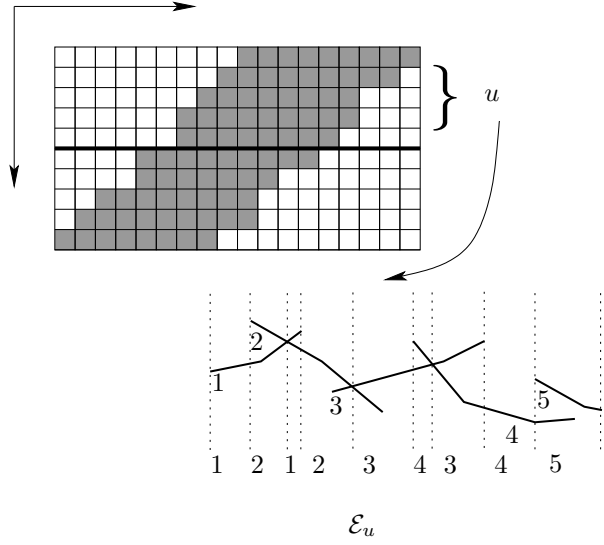}
\caption{A single node $u$ of the tree $T_h$ and a schematic
representation of its upper envelope $\E_u$.}
\label{canonenv}
\end{center}
\end{figure}

We next compute $F_q$ and its ``last pair'' $\pi_q$, defined
analogously to $\rho_q$. What we need to do is to compute
$\max\{\E_u(\pi) \mid \pi\le\pi_q\}$, over all nodes $u\in N_q$, and
return the largest of these values (along with its corresponding
rectangle).

However, we cannot afford to enumerate the values of the  envelopes
$\E_u$ for all nodes $u\in T_h$ explicitly, because we may have
$\Theta(n_v)$ envelopes, each consisting of $\Theta(n_v)$ values, so
we may need quadratic storage for an explicit representation of the
envelopes. (This is the same problem that we faced in the preceding
subsection.) We therefore need an implicit representation that would
still allow us to compute the maximum of an envelope within a query
prefix range $\pi\le\pi_q$, in polylogarithmic time.

To do so, we use the compact representation of an envelope by its
breakpoints, as used in the preceding case. The divide-and-conquer
construction of the upper envelope of the entire range of rows of
$A$, described in the preceding subsection, yields as a by-product
all the upper envelopes $\E_u$, over all nodes $u$ of $T_h$. Each
envelope $\E_u$ is represented as a sequence of $O(m_u\alpha(m_u))$
intervals of columns, where $m_u$ is the number of rows stored at $u$
(the size of the subtree of $u$), so that, over each interval,
$\E_u$ is attained by some fixed row.

We thus face the following subproblem.  We are given an upper
envelope $\E_u$, defined at a node $u$ of $T_h$ which spans $m_u$
rows of $A$, as a sequence of $O(m_u\alpha(m_u))$ intervals of
columns delimited by breakpoints of $\E_u$ (as above some of these
breakpoints may be endpoints of the domains of definition of some
rows). Our goal is to preprocess $\E_u$ into a data structure so
that, given a prefix range of columns $\pi\le\pi_q$, we can compute
$\max\{\E_u(\pi) \mid \pi\le\pi_q\}$ efficiently. In the spirit of
the preceding discussion, the preprocessing has to take time that is
near-linear in $m$, and cannot afford an explicit enumeration of
$\E_u$. Instead, we use the following approach.

The compact representation of $\E_u$ calls for the design of a
black-box routine that receives as input a row $\rho$ and an
interval $[\pi_1,\pi_2]$ of columns, and returns $\max\{A_{\rho \pi}
\mid \pi_1\le \pi\le\pi_2\}$. Having such a routine at hand, we
first compute, by repeated calls to the black-box routine, the
maximum value of $\E_u$ over each of its $O(m_u\alpha(m_u))$
intervals (recalling that within each of these intervals $\E_u$ is
attained by a single row), then compute the cumulative maxima, for
each prefix of this sequence of intervals, and store these prefix
maxima in an array. Then, given a query index $\pi_q$, we find the
largest prefix of intervals that fully precede $\pi_q$, retrieve the
cumulative maximum of this prefix, and make one more call to the
black-box routine to retrieve $\max\{A_{\rho \pi} \mid \pi_\mx+1 \le
\pi\le\pi_q\}$, where $\pi_\mx$ is the index of the last column of
the last complete interval in the prefix, and where $\rho$ is the
row attaining the envelope $\E_u$ at the next interval. We return
the maximum of the output of this call and the retrieved prefix
maximum.

To implement this black-box routine, we apply a simpler variant of
the construction described so far, to the transposed matrix $A^t$.
That is, we store the rows of $A^t$ (originally, columns of $A$)
in a balanced binary tree $T_h^t$, and apply a divide-and-conquer
procedure for computing, for each node $w$ of $T_h^t$, the upper
envelopes $\E_w^t$ of the pseudo-segments corresponding to the
rows of $A^t$ stored at $w$, again, representing each envelope
$\E_w^t$ as a sequence of $O(m_w\alpha(m_w))$ intervals, where
$m_w$ is the size of the subtree of $w$. Now, given a query
$(\rho,\,[\pi_1,\pi_2])$, we search in $T_h^t$ and obtain a
representation of the interval $[\pi_1,\pi_2]$ as the disjoint
union of $O(\log n_v)$ canonical intervals of rows of $A^t$
(columns of $A$), corresponding to $O(\log n_v)$ nodes of $T_h^t$.
For each such node $w$, we retrieve $\E_w^t(\rho)$, in $O(\log m_w)$
time, by searching with $\rho$ through the sequence of intervals
representing $\E_w^t$. The maximum of these $O(\log n_v)$ values
$\E_w^t(\rho)$ is the desired maximum that we seek.
The overall cost of this computation is $O(\log^2 n_v)$.

To complete the analysis, we next bound the storage, preprocessing
cost, and query time for the entire structure.

Both trees $T_h$ and $T_h^t$ associated with a secondary node $v$ of
the range tree $T$ are of size $O(n_v \alpha(n_v)\log n_v)$,
including all secondary search trees over the upper envelopes, and
it takes a total of $O(n_v \alpha(n_v) \log^2 n_v )$ time to
construct them. (The divide-and-conquer algorithm does in fact
produce all the sub-envelopes at all the nodes of $T_h$ or of
$T_h^t$, at a particular secondary node $v$, within the above time
bound.)
 For each node $u\in T_h$ we
compute the maximum in each interval of $\E_u$ using $T_h^t$. This
takes $O(n_v \alpha(n_v)\log^3 n_v)$ time.

Summing over all secondary  nodes $v$ in $T$, we obtain that the
size of the entire range tree (including the respective trees $T_h$
and $T_h^t$ and a search tree over each envelope $\E_u$ (resp.,
$\E^t_u$) for each node $u$ in $T_h$ (resp., $T_h^t$)) is $O(n
\alpha(n)\log^3 n)$. The total preprocessing time is  $O(n
\alpha(n)\log^5 n)$ time, because the cost of processing each block
of each envelope $\E_u$, for nodes $u$ of $T_h$, is $O(\log^2 n_v)$,
using the black-box routine.

A query in each secondary node $v$ in which $q$ falls in the second
or fourth quadrant of $B_v$ takes $O(\log^3 n_v)$ time. This running
time follows since we need to perform one binary search in $T_h$ to
locate the $O(\log n_v)$ nodes $N_q$ representing the prefix $E_q$.
In each such node $u\in N_q$ we perform another binary search in
$\E_u$ to find the longest prefix of intervals that fully precede
$\pi_q$ and retrieve the cumulative maximum of this prefix. Finally
in each such node $u$ we make one more query to the data structure
representing $T_h^t$ to retrieve $\max\{A_{\rho \pi} \mid \pi_\mx+1
\le \pi\le\pi_q\}$, where $\pi_\mx$ is the index of the last column
of the last complete interval in the prefix of intervals fully
preceding $\pi_q$, and where $\rho$ is the row attaining the
envelope $\E_u$ at the next interval, which contains $\pi_q$. This
last query takes $O(\log^2 n_v)$ time, from which the overall cost
of $O(\log^3 n_v)$ at $v$ follows. This dominates the logarithmic
query cost at nodes $v$ where $q$ lies in the first or third
quadrant, and, summed over all secondary nodes $v$ of $T$, yields an
overall $O(\log^5 n)$ query time.

We recall that the entire presentation caters to maximal $P$-empty rectangles
having two defining points in the first quadrant of $B_v$ and two in the third
quadrant. To handle rectangles having two defining points in each of the
second and fourth quadrants, we prepare a second, symmetric version of
the structure in which the roles of quadrants are appropriately interchanged, and query both structures with $q$.

We can reduce the query time and the preprocessing time required at
a secondary node $v$ by a logarithmic factor using fractional
cascading \cite{CGfrac}. This technique allows us to insert bridges
between the envelopes corresponding to the nodes $w$ of  $T_h^t$ so
that once we locate the interval covering a particular column $\rho$
(of $A^t$) in $\E_w^t$ we could locate the interval containing
$\rho$ in the envelope $\E_{w'}^t$ of a node $w'$ adjacent to $w$ in
$O(1)$ time. This allows us to construct a data structure over
$T_h^t$ so the maximum in a particular row $\rho$ of $A$ and a range
of columns $[\pi_1,\pi_2]$ of $A$ can be found in $O(\log n_v)$ time
instead of $O(\log^2 n_v)$ time. This modification does not incur
any space overhead.

In summary, we obtain the following main result of the paper.

\begin{theorem} \label{main-bdd}
The data structure described above requires $O(n\alpha(n)\log^3 n)$
storage, and can be constructed in $O(n \alpha(n) \log^4 n)$ time.
Using the structure, one can find the largest-area  $P$-empty
rectangle contained in $B$ and containing a query point $q$ in
$O(\log^4 n)$ time.
\end{theorem}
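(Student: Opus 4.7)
The plan is to assemble the theorem as a bookkeeping summary over all the pieces of the construction in Sections 3.1--3.3; there is no new algorithmic idea needed, only a careful accounting of costs and a correctness argument that the pieces jointly cover every maximal $P$-empty rectangle that can contain a query point.

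First I would argue correctness. Given a query $q$, any maximal $P$-empty rectangle $R$ containing $q$ is either anchored at $\bd B$, in which case it is one of the $O(n)$ rectangles of classes (i)--(iv) of Section~\ref{sec:boundary} stored in the global segment tree $S$ and therefore found by the $O(\log^2 n)$-time query in $S$, or it is bounded by four points of $P$. In the latter case, the analysis preceding Figure~\ref{bddrect} pins down a unique secondary node $v$ in the $O(\log^2 n)$-size search set of $q$ such that $R\subseteq B_v$ and $R$ contains both $q$ and the origin $o_v$. Classifying the four defining points of $R$ by the quadrants of $B_v$ around $o_v$ yields three exhaustive configurations: three points in a halfplane (handled by adding $O(n_v)$ extra rectangles per node $v$ to $S$ in Section~\ref{sec:rect}, case (i)), one point in each quadrant (case (ii), also added to $S$), and two points in each of a pair of opposite quadrants (handled by the Monge-matrix structure of Section~\ref{s13} plus Section~\ref{s24}, instantiated once for the first/third-quadrant pair and once for the symmetric second/fourth-quadrant pair). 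Thus querying $S$ together with both instances of the Monge-matrix structure at every secondary node in the search set of $q$, and returning the overall maximum, is guaranteed to output the largest-area $P$-empty rectangle containing $q$.

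For the resource bounds I would simply sum the contributions already established. The segment tree $S$ holds $O(n)$ anchored rectangles together with the $O(n\log^2 n)$ rectangles of cases (i) and (ii) of Section~\ref{sec:rect}, but because $S$ keeps only the largest rectangle per secondary node, its size remains $O(n\log n)$ while its preprocessing is $O(n\log^4 n)$ and its query time is $O(\log^2 n)$. The dominant contribution comes from the Monge-matrix structure attached to each secondary node $v$ of the range tree $T$: the trees $T_h$, $T_h^t$, their node envelopes $\E_u$, $\E_w^t$ in compact form, and the auxiliary range-maxima trees of Section~\ref{s13} occupy $O(n_v\alpha(n_v)\log n_v)$ space at $v$; preprocessing takes $O(n_v\alpha(n_v)\log^2 n_v)$ time for the envelopes and, after the fractional-cascading speedup of the black-box routine, $O(n_v\alpha(n_v)\log^2 n_v)$ time for computing the per-interval maxima on $T_h$. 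Using the standard identity $\sum_v n_v = O(n\log^2 n)$ over all secondary nodes of $T$, we obtain the claimed total storage $O(n\alpha(n)\log^3 n)$ and preprocessing time $O(n\alpha(n)\log^4 n)$.

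For the query bound, at each of the $O(\log^2 n)$ secondary nodes $v$ on the search set of $q$ we spend $O(\log n_v)$ time in cases (i)--(ii) of Section~\ref{sec:rect} and in the first/third-quadrant structure of Section~\ref{s13}, and $O(\log^2 n_v)$ time in the second/fourth-quadrant structure of Section~\ref{s24} after fractional cascading (one binary search in $T_h$, a per-canonical-node prefix lookup, and one black-box call reduced to $O(\log n_v)$ by the cascaded $T_h^t$). Summed over the search set, this is $O(\log^4 n)$, dominating the $O(\log^2 n)$ cost of querying $S$. The only delicate point in the write-up, which I would take care to verify, is that the case analysis by quadrants really is exhaustive and that the two symmetric instantiations of the Monge-matrix structure together cover every bounded maximal $P$-empty rectangle containing $q$ that is not already inserted into $S$; everything else is a straightforward sum over the constructions of Sections \ref{sec:boundary}--\ref{s24}.
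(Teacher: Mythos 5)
Your proposal is correct and follows essentially the same route as the paper: the theorem is obtained there, too, as a summary of the case analysis (anchored rectangles, three-points-in-a-halfplane, one-point-per-quadrant, two-points-in-opposite-quadrants) together with the cost accounting of the segment tree $S$, the per-node Monge-matrix structures, and the final logarithmic-factor savings from fractional cascading in the query and preprocessing bounds. Your sums $\sum_v n_v \alpha(n_v)\log n_v = O(n\alpha(n)\log^3 n)$ and the $O(\log^2 n_v)$-per-secondary-node query cost after cascading match the paper's derivation of the stated bounds.
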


\section{Submatrix maxima in totally monotone matrices}

Consider a partially defined totally monotone  $n \times n$
 matrix $A$ in
which the defined entries in each row are consecutive.

The range minima data structure that we associated with a secondary node
in Section \ref{s13} is in fact a general data structure for
preprocessing such a matrix $A$, in $O(n\log^2 n)$ time, so that we
can find the maximum of any row within an interval of columns in $O(1)$ time. (To get a constant query time
we need one of the more sophisticated range maxima data structures mentioned there.) The
size of this data structure is linear in $n$.

If  $A$ is a {\em double staircase matrix} (so the defined entries
of each column are also consecutive) then we showed in Section
\ref{s24} how to construct a data structure so that we can find the
maxima in any submatrix of $A$ defined by a prefix of the rows and a
prefix of the columns, in $O(\log^2 n)$ time. This data structure
takes $O(n\alpha(n) \log n)$ space and
 $O(n\alpha(n) \log^2 n)$ time to construct.

 The latter data structure can be easily extended so that it can
 find the maxima in any contiguous submatrix of $A$. The bounds remains the same.
 Since our application does not require  general submatrix queries, we
 leave out the details of this extension, which are straightforward.
Nevertheless, hoping that applications of this extended structure
will arise in the future, we state the result explicitly:
\begin{theorem}
Given a double-staircase totaly monotone $n \times n$ matrix $A$,
one can preprocess it, in $O(n\alpha(n) \log^2 n)$ time, into a data
structure of size $O(n\alpha(n)\log n)$, so that, given any
contiguous submatrix $B$ of $A$, the maximum entry of $B$ can be
computed in $O(\log^2 n)$ time.
\end{theorem}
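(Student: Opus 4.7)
\medskip

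\noindent\textbf{Proof plan.}
The plan is to extend the data structure from Section~\ref{s24}, which supports prefix-by-prefix queries, so that it handles arbitrary contiguous submatrix queries. I would keep the trees $T_h$ and $T_h^t$ and their compactly represented upper envelopes $\E_u$, $\E^t_w$ exactly as before, and only augment them with one-dimensional range-maximum gadgets over the interval sequences representing each envelope. The key observation is that the internal black-box routine built in Section~\ref{s24} already accepts an arbitrary column range $[\pi_1,\pi_2]$ (its argument was never restricted to a prefix), so the only genuinely new ingredient needed at the top level is a way to turn prefix queries on the $\E_u$'s into range queries.

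Given a query with row range $[\rho_1,\rho_2]$ and column range $[\pi_1,\pi_2]$, I would first decompose $[\rho_1,\rho_2]$ into $O(\log n)$ canonical node-subsets of $T_h$. At each such node $u$ the task reduces to evaluating $\max\{\E_u(\pi)\mid \pi_1\le \pi\le \pi_2\}$. Recall that $\E_u$ is stored as a sequence of $O(m_u\alpha(m_u))$ column intervals, each attained by a single row $\rho$ and tagged, during preprocessing, with the maximum of $A_{\rho,\pi}$ over the interval (this tag is already computed in Section~\ref{s24} via the black-box on $T_h^t$). Over this tagged sequence I would build a standard linear-space sparse-table range-maximum structure supporting $O(1)$ range queries; across all nodes of $T_h$ this adds only $O(n\alpha(n)\log n)$ extra space and linear additional preprocessing.

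To answer a query at $u$, I would binary-search in the interval sequence of $\E_u$ to locate the block of intervals fully contained in $[\pi_1,\pi_2]$, retrieve its maximum in $O(1)$ from the sparse table, and then handle the at most two boundary intervals that are only partially covered: each such interval is attained by a single known row $\rho$, so the maximum of $A_{\rho,\pi}$ over the relevant subrange is obtained by a direct call to the black-box on $T_h^t$, which returns in $O(\log n)$ time after the fractional-cascading speed-up of Section~\ref{s24}. Thus each canonical node contributes $O(\log n)$, and since there are $O(\log n)$ such nodes, the overall query cost is $O(\log^2 n)$, matching the claim. The storage and preprocessing bounds are inherited from Section~\ref{s24}, since the sparse tables are linear in envelope size.

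The step I expect to need the most care is the bookkeeping that associates each interval of each envelope $\E_u$ with the particular row that attains it, so that the two boundary black-box dispatches can be made without incurring an extra logarithmic search. This association is naturally a by-product of the divide-and-conquer construction of the envelopes and costs nothing asymptotically, but it must be recorded explicitly at preprocessing time; once this is in place, the correctness and complexity analyses mirror the prefix-by-prefix argument already carried out in Section~\ref{s24}, and no further ideas are required.
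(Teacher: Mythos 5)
Your proposal is correct and supplies exactly the details that the paper deliberately omits: the paper merely asserts that the prefix-by-prefix structure of Section~\ref{s24} ``can be easily extended'' to general contiguous submatrix queries and leaves out the proof, and your combination of a canonical row-range decomposition in $T_h$, a range-maximum structure over each envelope's tagged interval sequence, and black-box calls on $T_h^t$ for the two clipped boundary intervals is precisely the intended straightforward extension. One small correction: a textbook sparse table over $k$ intervals occupies $O(k\log k)$ space, so to stay within the claimed $O(n\alpha(n)\log n)$ bound you should instead use the linear-space constant-query range-maximum structure of \cite{Bender:2000} (already invoked in Section~\ref{s13}), or simply a balanced search tree storing subtree maxima, whose $O(\log n)$ query cost is absorbed by the boundary black-box calls you already perform at each canonical node.
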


%%%%%%%%%%%%%%%%%%%%%%%%%%%%%%%%%%%%%%%%%%%%%%%%%%%%%%%%%%%

\end{document}